\theoremstyle{plain}
\newtheorem{theorem}{Theorem}
\newtheorem{lemma}{Lemma}
\newtheorem{corollary}{Corollary}
\theoremstyle{definition}
\newtheorem{remark}{Remark}
\newtheorem{example}{Example}
\renewenvironment{proof}[1][Proof]{\textbf{#1.} \ }{\hspace{\stretch{1}}\rule{0.5em}{0.5em} \\} 
\newcommand{\eps}{\varepsilon}
\newcommand{\hh}{\mathcal{H}}
\newcommand{\lh}{\mathcal{B}(\mathcal{H})} 
\newcommand{\bbb}[1]{\mathcal{B}_{#1}(\mathcal{H})} 
\newcommand{\sh}{\mathcal{S(H)}} 
\newcommand{\ip}[2]{\left\langle \,#1\, | \,#2\, \right\rangle} 
\newcommand{\kb}[2]{\left|#1\,\right\rangle \left\langle \,#2 \right|} 
\newcommand{\no}[1]{\left\|#1\right\|} 
\newcommand{\tr}[1]{\textrm{tr}\, (#1)} 
\newcommand{\N}{\mathbb N} 
\newcommand{\Z}{\mathbb Z} 
\begin{document}

\preprint{APS/123-QED}

\title{Relations between convergence rates\\in Schatten $p$-norms}

\author{Paolo Albini}
\email{albini@fisica.unige.it}
\author{Alessandro Toigo}%
\email{toigo@ge.infn.it}
\affiliation{Dipartimento di Fisica, Universit\`a di Genova, Via Dodecaneso 33, 16146 Genova, Italy
}

\author{Veronica Umanit\`a}
\email{veronica.umanita@polimi.it} 
\affiliation{Dipartimento di Matematica ``F. Brioschi'',
Politecnico di Milano, Piazza Leonardo da
Vinci 32, I-20133 Milano, Italy}

\date{\today}

\begin{abstract}
In quantum estimation theory and quantum tomography, the quantum state obtained by sampling converges to the `true' unknown density matrix under topologies that are different from the natural notion of distance in the space of quantum states, i.e.~the trace class norm. In this paper, we address such problem, finding relations between the rates of convergence in the Schatten $p$-norms and in the trace class norm.
\end{abstract}

\pacs{02.30.Mv, 03.65.Db}
\maketitle

\section{Introduction}\label{Intro}
In recent years, the advent of quantum tomography has brought a renewed interest for quantum estimation techniques in physics. The underlying physical issue goes back to a well known problem raised by Pauli: for a quantum system, find a set of observables such that the knowledge of the associated probability distributions completely determines the density matrix. Quantum tomography (\cite{VR,Leo,D'Ariano}) solves this problem pointing out a fixed family of observables (quorum) and giving a formula that expresses the state of the system in terms of their probability distributions.
Two questions then arise. First, in many applications the quorum is an infinite set, and it is clear that in practice one can measure only a finite number of observables. Second, and more fundamental, the exact probability distribution of a fixed observable in the quorum can not be measured; experiments yield only an empirical estimate of it, based on a finite number of samples. Thus, the main practical problem is to determine the rate of convergence of the empirical reconstructed state to the `true' unknown state of the quantum system, as the number of samples increases. The natural notion of convergence is with respect to the topology induced by the trace class norm. Indeed, this is the weakest topology that assures strong convergence of the probability distribution of an arbitrary measurement of the empirical state to the probability distribution of the same measurement on the true state, uniformly over all possible measurements. However, since reconstruction formulas involve convergence in norms (like the Hilbert-Schmidt norm) that are weaker than the trace class norm, almost all the results in the literature consider convergence with respect to such weaker norms (\cite{Gill1,Gill2}), usually not giving explicitely the rates. In \cite{Gill2eMezzo,Gill3}, the authors find the rates of convergence, but with respect to a nonmetrisable topology.

A result in \cite{S} shows that on the space of states the trace class and the Schatten $p$-norm topologies with $p>1$ are equivalent. However, in Simon's theorem a quantitative relation between the rates of convergence in the two norms is missing.

In this paper, following the work of \cite{S}, we collect some useful inequalities relating the trace class norm $\no{\cdot}_1$ with the Schatten $p$-norms $\no{\cdot}_p$ for $p>1$. This is intended as a step for passing from the estimate of convergence rates in $p$-norms to the rates in trace class norm.
In our main Theorem \ref{teo}, given trace class operators $A_0$ and $A$, with $A_0$ normal and $\no{A}_1 = \no{A_0}_1 = 1$, we give an estimation of $\no{A-A_0}_1$ with a bound depending only on $\no{A-A_0}_p$ and on the decreasing rate of the eigenvalues $\mu_n$ of $A_0$. In doing so, we have in mind the practical situation in which $A_0$ is the unknown quantum state to be determined, and $A$ is its empirical estimate, over which we impose only a normalisation condition. Manipulating this result, one can find a function $f_{A_0}$, depending only on the $\mu_n$'s, such that $\no{A-A_0}_1\leq f_{A_0}(\no{A-A_0}_p)\no{A-A_0}_p$.
If the decreasing rate of the $\mu_n$'s is fast enough, then the product $f_{A_0}(\no{A-A_0}_p)\no{A-A_0}_p$ converges to $0$ as $A$ approaches $A_0$ with respect to the $p$-norm (see example \ref{es3}). In particular, assuming $A_0$ is chosen in a restricted class of physically feasible states, this allows to find rather explicitly the converging rate in the trace class norm once the $\no{\cdot}_p$-rate is known.

\section{Notations}

Let $\hh$ be a complex Hilbert space, with norm $\no{\cdot}$ and
scalar product $\ip{\cdot}{\cdot}$ linear in the second entry. $\lh$
is the Banach space of bounded operators on $\hh$, with uniform norm
$\no{\cdot}_\infty$. If $A\in\lh$, the modulus of $A$ is $| A | =
(A^\ast A)^{1/2}$.

For $p\geq 1$, we denote by $\bbb{p}$ the Schatten $p$-ideals in $\lh$. We recall that such ideals are the Banach spaces
$$
\bbb{p} = \{ A\in\lh \mid \tr{ |A|^p } < \infty \}
$$
endowed with the norm
$$
\no{A}_p = [\tr{|A|^p}]^{1/p}.
$$
We have $\bbb{p} \subset \bbb{q}$ and $\no{A}_q \leq \no{A}_p$ if $p<q$. In particular, the inclusion $\bbb{p} \hookrightarrow \bbb{q}$ is continuous. The following H\"older inequalities hold
$$
\no{AB}_1 \leq \no{A}_p \no{B}_{p/(p-1)} , \qquad \no{AB}_1 \leq \no{A}_1 \no{B}_\infty .
$$

$\bbb{1}$ is the Banach space of trace class operators on $\hh$. We denote by $\sh$ the set of states in $\bbb{1}$, i.e.~the closed subset of positive trace one elements.

If $A\in\bbb{p}$, then $A$ is a compact operator, so it has a {\em canonical decomposition}
$$
A = \sum_{n \in I} \lambda_n (A) \kb{v_n}{u_n},
$$
where $I = \{ 0,1,2 \ldots \}$ is a finite or countably infinite subset of $\N$,
$\{ v_n \}_{n\in I}$ and $\{ u_n \}_{n\in I}$ are orthonormal sets in $\hh$, and $\lambda_n (A) > 0$. Moreover,
$$
\no{A}^p_p = \sum_{n \in I} \lambda_n (A)^p .
$$
In addition, if $A$ is normal, then the spectral decomposition
$$A = \sum_{n \in I} \mu_n (A) \kb{u_n}{u_n}$$
holds, where $\{\mu_n (A)\}_{n\in I}$ are the nonzero eigenvalues
of $A$ (each eigenvalue appearing in the sequence as many times as
its finite multiplicity) and $\mid\!\mu_n(A)\!\mid=\lambda_n (A)$. 
In particular, if $A$ is positive, then
the spectral and canonical decompositions coincide, i.e. $v_n=u_n$
and $\lambda_n(A)=\mu_n(A)$ are the strictly positive eigenvalues
of $A$.

\section{Main results}
In this section we suppose $1<p<+\infty$. 
\begin{theorem}\label{teo}
Let $A_0,\,A\in\bbb{1}$ with $A_0$ normal and
$\no{A_0}_1 = \no{A}_1 = 1$. Let $\sum_{n\in I} \mu_n ( A_0 )
\kb{u_n}{u_n}$ be the spectral decomposition of $A_0$. Then
\begin{equation}\label{ris}
\no{A_0 - A}_1 \leq 3 N^{(p-1)/p} \no{A_0 - A}_p + 2 \sum_{n\geq
N} \mid\!\mu_n ( A_0 )\!\mid
\end{equation}
for all $N\in\N$.
\end{theorem}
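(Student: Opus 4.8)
The plan is to split $A-A_0$ into a \emph{finite-rank} part, controllable in trace norm by its $p$-norm via a H\"older argument, and a \emph{tail} part supported away from the dominant eigenspaces of $A_0$, controllable by the normalisation. Fix $N$ and let $P=\sum_{n<N}\kb{u_n}{u_n}$ be the projection onto the first $N$ eigenvectors of $A_0$, with complementary projection $Q=1-P$. Since $A_0$ is normal with the $u_n$ as eigenvectors, it commutes with $P$, so $PA_0Q=QA_0P=0$, $\no{PA_0P}_1=\sum_{n<N}|\mu_n(A_0)|$ and $\no{QA_0Q}_1=\sum_{n\geq N}|\mu_n(A_0)|$. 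Two facts will carry the argument. First, any operator $B$ of rank at most $r$ satisfies $\no{B}_1\leq r^{(p-1)/p}\no{B}_p$, obtained by applying the discrete H\"older inequality with conjugate exponents $p$ and $p/(p-1)$ to its singular values. Second, the pinching inequality $\no{PXP}_1+\no{QXQ}_1=\no{PXP+QXQ}_1\leq\no{X}_1$, valid because $PXP$ and $QXQ$ have orthogonal supports and the pinching map is trace-norm contractive.

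Writing $B=A-A_0$, I would use the asymmetric decomposition
\begin{equation*}
A-A_0=PB+QBP+QBQ ,
\end{equation*}
which is exact, the three pieces recovering all four blocks as $PB=PBP+PBQ$ together with $QBP$ and $QBQ$. The first two pieces are finite rank: $PB$ has rank at most $N$, and $QBP=(QB)P$ also has rank at most $N$ because the trailing $P$ caps it. Since multiplication by a projection does not increase the $p$-norm, the rank bound gives $\no{PB}_1\leq N^{(p-1)/p}\no{PB}_p\leq N^{(p-1)/p}\no{B}_p$, and likewise $\no{QBP}_1\leq N^{(p-1)/p}\no{B}_p$. Grouping the two top blocks into the single rank-$N$ operator $PB$, rather than estimating $PBP$ and $PBQ$ separately, is precisely what keeps the final constant at $3$ instead of $4$.

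The genuine obstacle is the tail block $QBQ=QAQ-QA_0Q$, which is not finite rank, so the H\"older trick is unavailable; this is where the normalisation must enter. I would bound $\no{QBQ}_1\leq\no{QAQ}_1+\no{QA_0Q}_1$ and estimate $\no{QAQ}_1$ by pinching, $\no{QAQ}_1\leq\no{A}_1-\no{PAP}_1=1-\no{PAP}_1$. Then $\no{PAP}_1\geq\no{PA_0P}_1-\no{PBP}_1=\big(1-\sum_{n\geq N}|\mu_n(A_0)|\big)-\no{PBP}_1$, so that $\no{QAQ}_1\leq\sum_{n\geq N}|\mu_n(A_0)|+\no{PBP}_1$, and $\no{PBP}_1\leq N^{(p-1)/p}\no{B}_p$ once more by the rank bound. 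Adding $\no{QA_0Q}_1=\sum_{n\geq N}|\mu_n(A_0)|$ yields $\no{QBQ}_1\leq N^{(p-1)/p}\no{B}_p+2\sum_{n\geq N}|\mu_n(A_0)|$, and summing the three contributions produces exactly $3N^{(p-1)/p}\no{A-A_0}_p+2\sum_{n\geq N}|\mu_n(A_0)|$. The only point beyond this bookkeeping is to check the rank and support claims when $I$ is finite or the $u_n$ fail to span $\hh$; this affects none of the estimates, since $A_0$ acts as $0$ on the extra part of $\ran Q$.
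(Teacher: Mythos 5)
Your proof is correct and follows essentially the same route as the paper: the same decomposition $A-A_0=P B+QBP+QBQ$ with $P$ the spectral projection onto the first $N$ eigenvectors, the same pinching inequality $\no{PXP}_1+\no{QXQ}_1\leq\no{X}_1$ to control the tail block through the normalisation, and the same triangle-inequality step transferring the estimate from $PA_0P$ to $PAP$. The only cosmetic difference is that you phrase the H\"older step via the rank of the truncated blocks, whereas the paper writes it as $\no{P_N X}_1\leq\no{P_N}_{p/(p-1)}\no{X}_p=N^{(p-1)/p}\no{X}_p$; these are the same estimate.
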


Since some technical lemmas are needed, we postpone the proof of Theorem \ref{teo} to the next section.
\begin{remark} We notice that the hypotheses of the theorem do not ask for either of the two operators to belong to $\sh$. This is convenient, since not all the estimation schemes lead to actual states as estimates: a notable example is the Pattern Function Projection estimator \cite{Gill2}, whose estimates are in $\bbb{1}$ but not necessarily positive.
It is of course very natural to see $A_0 \in \sh$ as the unknown state of the system (since it automatically satisfies all the hypotheses) and the less subjected to hypotheses $A$ as its empirical estimate. In this setting, eq.~(\ref{ris}) expresses the $1$-norm rate of convergence of $A$ to $A_0$ in terms of the $p$-rate and of the decreasing rate of the eigenvalues of $A_0$. Anyway, if we know the estimate $A$ to be normal (a rather common case in the literature) then in eq.~(\ref{ris}) we can exchange the roles of $A_0$ and $A$, thus giving a bound depending on the decreasing rate of the eigenvalues of $A$ instead of $A_0$. We remind the reader that in the general estimation framework we may know nothing about the estimand state and its eigenvalue behaviour, while we actually have hold of the estimator. Thus which of the two possible interpretation is the best (or the only possible) choice depends in general on the case at hand.
\end{remark}

In general one can not exactly compute $\sum_{n\geq
N}\mid\!\mu_n ( A_0 )\!\mid$, but, in several cases, we can
give an estimation of this quantity, as in the following.
\begin{example}\label{es1}
Let $A_0 \in\sh$ with eigenvalues $\lambda_n ( A_0 ) \leq C (n+1)^{-\alpha}$ ($n\in\N$) for some $C > 0$ and $\alpha>1$. Then
$$
\no{A_0 - A}_1 \leq 3 N^{(p-1)/p} \no{A_0 - A}_p + \frac{2}{\alpha-1}\,\frac{1}{N^{\alpha-1}}
$$
for all $A\in\bbb{1}$ with $\no{A}_1 = 1$, $N\geq 1$.\\
Indeed, since the map $x\mapsto (x+1)^{-\alpha}$ is decreasing, we have
$$\sum_{n\geq N}\lambda_n(A_0)\leq C \sum_{n\geq N}\frac{1}{(n+1)^{\alpha}}\leq C \int_{N-1}^{+\infty}\frac{1}{(x+1)^\alpha}dx= \frac{C}{\alpha-1}\,\frac{1}{N^{\alpha-1}},$$ and so the claimed inequality follows by equation (\ref{ris}).
\end{example}
\begin{example}\label{es1,5}
Let us consider $A_0\in\sh$ with eigenvalues
$$
\lambda_n(A_0)\leq Ce^{-\beta n}
$$
and 
$$
\beta>0 , \ C \geq \left(\sum\nolimits_{n\geq 0}e^{-\beta n}\right)^{-1}=1-e^{-\beta}.
$$
Then
\begin{equation}\label{quella}
\no{A_0 - A}_1 \leq 3 N^{(p-1)/p} \no{A_0 - A}_p + 2 C\frac{e^{-\beta N}}{1-e^{-\beta}}
\end{equation}
for all $A\in\bbb{1}$ with $\no{A}_1 = 1$, $N\geq 1$. In fact
\begin{equation}\label{conv0}
\sum_{n\geq N}\lambda_n(A_0)\leq C\sum_{n\geq N}e^{-\beta n}=C\left(\frac{1}{1-e^{-\beta}}-\frac{1-e^{-\beta N}}{1-e^{-\beta}}\right)=C\frac{e^{-\beta N}}{1-e^{-\beta}}.
\end{equation}
\smallskip
As a particular case, eq.~(\ref{quella}) applies to the Gibbs state
$$
A_0 = \sum_{n\geq 0}\frac{e^{-\beta n}}{Z}\kb{n}{n}\quad\quad(\beta>0),
$$
with
$$
C = \frac{1}{Z} = \Big[\sum\nolimits_{n\geq 0}e^{-\beta n} \Big]^{-1} = 1-e^{-\beta}.
$$
\end{example}
\smallskip
From the above Theorem, we obtain the following.
\begin{corollary}\label{cor1}
Let $A_0,\,\{ A_n \}_{n \in \Z_+}$ be elements in $\bbb{1}$ with $A_0$ normal and $\no{A_0}_1=\no{A_n}_1=1$, such that $\no{A_0-A_n}_p\to 0$. Then $\no{A_0-A_n}_1\to 0$.
\end{corollary}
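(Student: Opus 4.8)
The plan is to deduce the corollary from Theorem \ref{teo} by a standard two-step $\eps/2$ argument, exploiting the crucial fact that inequality (\ref{ris}) holds for \emph{every} $N\in\N$. The point is that the right-hand side of (\ref{ris}) splits the error into two pieces with opposite dependence on $N$: the tail sum $2\sum_{n\geq N}|\mu_n(A_0)|$ decreases as $N$ grows, while the prefactor $3N^{(p-1)/p}$ multiplying $\no{A_0-A_n}_p$ increases. Because these two quantities can be controlled independently, a large choice of $N$ kills the tail, after which the prefactor is merely a constant that the hypothesis $\no{A_0-A_n}_p\to 0$ absorbs.

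First I would fix $\eps>0$. Since $A_0\in\bbb{1}$, the sum of its singular values converges: indeed $\sum_{n\in I}\mid\!\mu_n(A_0)\!\mid=\no{A_0}_1=1<\infty$, using that $\mid\!\mu_n(A_0)\!\mid=\lambda_n(A_0)$ for the normal operator $A_0$. Hence the tail of this series tends to zero, and I can choose $N\in\N$ so large that $2\sum_{n\geq N}\mid\!\mu_n(A_0)\!\mid<\eps/2$.

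With this $N$ now frozen, the coefficient $3N^{(p-1)/p}$ is just a fixed positive constant. Invoking the hypothesis $\no{A_0-A_n}_p\to 0$, I would then pick $n_0$ such that $3N^{(p-1)/p}\no{A_0-A_n}_p<\eps/2$ for all $n\geq n_0$. Applying Theorem \ref{teo} with $A=A_n$ (whose hypotheses $A_0$ normal and $\no{A_0}_1=\no{A_n}_1=1$ are precisely those assumed here) for this value of $N$ yields $\no{A_0-A_n}_1<\eps$ for every $n\geq n_0$, which is the desired conclusion.

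There is no genuine obstacle beyond respecting the order of the two selections: $N$ must be chosen first, depending only on $A_0$ and $\eps$ and not on the sequence index, and only afterwards is $n_0$ determined in terms of the now-fixed $N$. The substantive work has already been carried out in Theorem \ref{teo}; the corollary simply records that its uniform-in-$N$ estimate is strong enough to promote $p$-norm convergence to $1$-norm convergence.
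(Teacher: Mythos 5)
Your proposal is correct and is essentially identical to the paper's own proof: fix $\eps$, choose $N$ first so that the tail $2\sum_{n\geq N}\mid\!\mu_n(A_0)\!\mid$ is below $\eps/2$ (possible since $\sum_n\mid\!\mu_n(A_0)\!\mid=\no{A_0}_1<\infty$), then use $\no{A_0-A_n}_p\to 0$ to make the term $3N^{(p-1)/p}\no{A_0-A_n}_p$ small, and apply Theorem \ref{teo}. The only difference from the paper is trivial bookkeeping of the constants.
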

\begin{proof}
Fix $\eps > 0$. Choose $N_\eps$ such that $\sum_{n\geq N_\eps} \mid\!\mu_n (A_0)\!\mid < \eps /4$, and $i_\eps$ such that $\no{A_0 - A_i}_p < N_\eps^{(1-p)/p} \eps /6$ for $i \geq i_\eps$. By eq.~\ref{ris}, for $i \geq i_\eps$ we have
$$\no{A_0 - A_i}_1 < 3 N_\eps^{(p-1)/p} \no{A_0 - A_i}_p + \eps /2 < \eps.$$
This proves our claim.
\end{proof}
In particular, this implies that the topologies induced on $\sh$ by $\bbb{1}$ and $\bbb{p}$ coincide.
\begin{remark}
Corollary \ref{cor1} is also a simple consequence of Theorem 2.19 in \cite{S}, noting that  $p$-convergence implies convergence in the strong operator topology of $\{A_n\}_n$ and $\{A^*_n\}_n$. However, in contrast with Theorem \ref{teo}, Simon's result does not give an explicit relation between the rates of convergence in the $p$- and $1$-norms. 
\end{remark}

Using eq.~\ref{ris}, we can obtain an alternative estimation of the distance $\no{ A - A_0 }_1$.
\begin{corollary}
Let $A_0 \in \bbb{1}$ be normal and $\no{A_0} = 1$. For $\eps > 0$ define
$$
N_{A_0} (\eps) := \min \{ N\in\N \mid \sum_{n\geq N} \mid\! \mu_n (A_0) \!\mid < \eps \}.
$$
Then, if $q=p/(p-1)$,
\begin{equation}\label{NA0}
\no{A_0 - A}_1 \leq ( 3 \sqrt[q]{N_{A_0} (\no{A_0 - A}_p)} + 2 ) \no{A_0 - A}_p ,
\end{equation}
for all $A\in\bbb{1}$ with $\no{A}_1 = 1$.
\end{corollary}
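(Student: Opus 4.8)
The plan is to apply Theorem \ref{teo} and then tune its free parameter $N$ to the measured data. Equation (\ref{ris}) holds for every $N \in \N$ and splits the bound into two competing contributions: the term $3 N^{(p-1)/p}\no{A_0-A}_p$, which grows with $N$, and the tail $2\sum_{n\geq N}|\mu_n(A_0)|$, which decreases to $0$ as $N$ increases. The idea is to choose $N$ just large enough that the tail is itself dominated by $\no{A_0-A}_p$, so that both contributions become multiples of the $p$-distance and can be collected into a single prefactor.

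First I would check that $N_{A_0}(\eps)$ is well defined for each $\eps > 0$. Since $A_0 \in \bbb{1}$ with $\no{A_0}_1 = 1$, we have $\sum_{n\in I}|\mu_n(A_0)| = \no{A_0}_1 = 1 < \infty$, so the tails $\sum_{n\geq N}|\mu_n(A_0)|$ form a sequence decreasing to $0$. Hence for every $\eps > 0$ the set $\{ N\in\N \mid \sum_{n\geq N}|\mu_n(A_0)| < \eps\}$ is nonempty and admits a minimum, so $N_{A_0}(\eps)$ exists.

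Assuming $A \neq A_0$ (the case $A = A_0$ gives $0 \leq 0$ trivially, the right-hand side vanishing since $\no{A_0-A}_p=0$), I would set $\eps := \no{A_0-A}_p > 0$ and $N := N_{A_0}(\eps)$. By the very definition of $N_{A_0}$ this choice guarantees $\sum_{n\geq N}|\mu_n(A_0)| < \eps = \no{A_0-A}_p$. Substituting this $N$ into (\ref{ris}) and bounding the tail accordingly yields
\begin{equation*}
\no{A_0-A}_1 \leq 3 N^{(p-1)/p}\no{A_0-A}_p + 2\no{A_0-A}_p .
\end{equation*}
Finally I would use the identity $(p-1)/p = 1/q$, valid because $q = p/(p-1)$, to rewrite $N^{(p-1)/p} = \sqrt[q]{N} = \sqrt[q]{N_{A_0}(\no{A_0-A}_p)}$, and then factor out $\no{A_0-A}_p$ to recover exactly (\ref{NA0}).

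There is no serious obstacle here: the entire content of the corollary is the observation that the arbitrary $N$ appearing in Theorem \ref{teo} may be selected as a function of the observed $p$-distance $\no{A_0-A}_p$. The only points demanding a little care are the well-definedness of $N_{A_0}$, secured by trace-class summability of the $\mu_n(A_0)$, and the degenerate case $A = A_0$, which must be excluded (or handled trivially) so that $\eps$ is strictly positive and $N_{A_0}(\eps)$ makes sense.
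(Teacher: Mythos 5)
Your proof is correct and is precisely the argument the paper intends (the corollary is stated without proof, as an immediate consequence of Theorem \ref{teo}): choose $N = N_{A_0}(\no{A_0-A}_p)$ in eq.~(\ref{ris}), bound the tail by $\no{A_0-A}_p$ via the definition of $N_{A_0}$, and rewrite $N^{(p-1)/p}$ as $\sqrt[q]{N}$. Your attention to the well-definedness of $N_{A_0}(\eps)$ (from $\sum_n |\mu_n(A_0)| = 1 < \infty$) and to the degenerate case $A = A_0$ is appropriate and does not change the substance.
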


Note that $\lim_{\eps \to 0} N_{A_0} (\eps) < \infty$ if and only if $A_0$ has finite rank.\smallskip\\

The above inequality is a less strict result than the bound in eq.~\ref{ris} and its consequence in Corollary \ref{cor1}. In fact, we show in the next example that one can fix $A_0 \in \sh$ such that,
for any sequence $\{A_n \}_{n\in\Z_+}$ in the unit ball in $\bbb{1}$ with $\no{A_0 - A_n}_p \to 0$, one has $( 3 \sqrt[q]{N_{A_0} (\no{A_0 - A}_p)} + 2 ) \no{A_0 - A}_p \to \infty$ (while, by Corollary \ref{cor1}, $\no{A_0 - A_n}_1 \to 0$).

\begin{example}
Fix $\alpha$ with $1 < \alpha < 2-1/p$. Suppose $A_0 \in \sh$ with $\lambda_n (A_0) =
C (n+1)^{-\alpha}$ ($n\in\N$), where $C = \left[ \sum_{n\in\N} (n+1)^{-\alpha} \right]^{-1}$.
Then,
$$
\sum_{n \geq N} \lambda_n (A_0) = C \sum_{n \geq N}\frac{1}{(n+1)^{\alpha}} \geq C \int^{+\infty}_N \frac{1}{(x+1)^{\alpha}} dx = \frac{C}{\alpha - 1} (N+1)^{1 - \alpha},
$$
so that
$$
N_{A_0} (\eps) \geq \left( \frac{C}{\eps (\alpha - 1)} \right)^{\frac{1}{\alpha - 1}} - 1 \qquad\mbox{for all $0<\eps< C/(\alpha - 1)$.}
$$
Therefore, if $\{A_m\}_{m\in\Z_+}$ is a sequence in the unit ball in $\bbb{1}$ such that $\no{A_0 - A_m}_p \to 0$, we get
$$
\lim_m( 3 \sqrt[q]{N_{A_0}(\no{A_0 - A_m}_p)} + 2 ) \no{A_0 - A_m}_p = +\infty .
$$
\end{example}

However, if some hypotheses are made about the decreasing rate of the eigenvalues of $A_0$, then $( 3 \sqrt[q]{N_{A_0} (\no{A_0 - A_m}_p)} + 2 ) \no{A_0 - A_m}_p \to 0$ for $\no{A_0 - A_m}_p \to 0$, as the following example shows.

\begin{example}\label{es3}
Suppose $A_0 \in \sh$ as in Example \ref{es1,5}.
Since $Ce^{-\beta N}/(1-e^{-\beta})<\eps$ if and only if $N>-\beta^{-1}\ln (C^{-1}\eps(1-e^{-\beta}))$, inequality (\ref{conv0}) implies that
$$
N_{A_0}(\eps)\leq \frac{1}{\beta}\,\ln\!\left(\frac{C}{\eps(1-e^{-\beta})}\right)+1 \qquad\mbox{for all $0<\eps<1$.}
$$
Therefore, if we have a sequence $\{A_m\}_{m\in\Z_+}$ in the unit ball in $\bbb{1}$ with $\no{A_0 - A_m}_p \to 0$ for some $p>1$, we obtain
$$
\lim_m\, (3\sqrt[q]{N_{A_0}(\no{A_0-A_m}_p)}+2)\no{A_0-A_m}_p=0 .
$$
\end{example}

\section{Proof of Theorem \ref{teo}}
\begin{lemma}\label{=} Let $A\in\bbb{1}$ and $P$ be a projection in $\hh$. If $Q=I-P$, then
$$\no{PAP}_1+\no{QAQ}_1=\no{PAP+QAQ}_1.$$
\end{lemma}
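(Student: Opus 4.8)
The plan is to exploit the block-diagonal structure that $B := PAP + QAQ$ acquires with respect to the orthogonal decomposition $\hh = \ran P \oplus \ran Q$. The first observation is that $PAP$ maps $\ran P$ into itself and annihilates $\ran Q$, while $QAQ$ does the reverse; hence $B$ is block-diagonal with blocks $PAP$ and $QAQ$ acting on two mutually orthogonal subspaces. Since the trace norm is the sum of the singular values, and singular values simply concatenate over block-diagonal summands with orthogonal supports, one expects $\no{B}_1 = \no{PAP}_1 + \no{QAQ}_1$. One inequality, $\no{B}_1 \leq \no{PAP}_1 + \no{QAQ}_1$, is immediate from the triangle inequality, so the real content is the reverse, which I would obtain by computing $|B|$ explicitly.

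The key computation is to identify $|B|$. Using $P^2 = P$, $Q^2 = Q$ and $PQ = QP = 0$, the cross terms cancel and
\[
B^\ast B = (PA^\ast P + QA^\ast Q)(PAP + QAQ) = PA^\ast PAP + QA^\ast QAQ .
\]
Writing $X = PA^\ast PAP = (PAP)^\ast(PAP)$ and $Y = QA^\ast QAQ = (QAQ)^\ast(QAQ)$, both are positive, and from $XQ = 0$, $YP = 0$ one reads off $\ran X \subseteq \ran P$, $\ran Y \subseteq \ran Q$, so that $X$ and $Y$ have orthogonal supports and $XY = YX = 0$. Note also that $X^{1/2} = |PAP|$ and $Y^{1/2} = |QAQ|$ by the definition of the modulus.

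The technical heart is then the elementary fact that two positive operators with orthogonal supports satisfy $(X+Y)^{1/2} = X^{1/2} + Y^{1/2}$: the square roots inherit the orthogonality of supports, so $X^{1/2}Y^{1/2} = Y^{1/2}X^{1/2} = 0$, whence $(X^{1/2} + Y^{1/2})^2 = X + Y$, and the positivity of $X^{1/2} + Y^{1/2}$ forces it to be the unique positive square root of $X+Y$. Applied here this yields $|B| = |PAP| + |QAQ|$.

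Finally, taking the trace and using its linearity gives $\no{B}_1 = \tr{|B|} = \tr{|PAP|} + \tr{|QAQ|} = \no{PAP}_1 + \no{QAQ}_1$, as claimed. The only step demanding genuine care — and thus the main obstacle — is the square-root splitting: verifying that orthogonality of supports is preserved under taking square roots, so that the cross terms really vanish. Everything else is bookkeeping with the relations $PQ = QP = 0$.
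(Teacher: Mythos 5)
Your proof is correct, but it reaches the conclusion by a genuinely different route than the paper. The paper's proof stays at the level of canonical decompositions: it writes $PAP=\sum_n\lambda_n(PAP)\kb{v_n}{u_n}$ and $QAQ=\sum_n\lambda_n(QAQ)\kb{z_n}{w_n}$, observes that the vectors of the first family lie in $P(\hh)$ and those of the second in $Q(\hh)$, so that the concatenation of the two expansions is itself a canonical decomposition of $PAP+QAQ$; the singular values of the sum are then the union of the singular values of the blocks, and the trace norms add. You instead compute the modulus algebraically: $B^\ast B=(PAP)^\ast(PAP)+(QAQ)^\ast(QAQ)$ because the cross terms contain $PQ=QP=0$, and then you invoke the splitting $(X+Y)^{1/2}=X^{1/2}+Y^{1/2}$ for positive operators with orthogonal supports, together with uniqueness of the positive square root, to get $|B|=|PAP|+|QAQ|$ and conclude by linearity of the trace. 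The one step you flag as delicate --- that orthogonality of supports passes to the square roots --- is indeed the only thing to check, and it follows from $\ker X=\ker X^{1/2}$ for $X\geq 0$ (since $\no{X^{1/2}\xi}^2=\ip{\xi}{X\xi}$), so your argument is complete. What each approach buys: yours avoids any mention of singular value decompositions and is self-contained at the level of the functional calculus, while the paper's explicit construction of the canonical decomposition of $PAP+QAQ$ is the same technique it reuses immediately afterwards in Lemma \ref{PAP}, where the modulus of $\sum_i P_iAP_i$ is not so easily computed and only an inequality survives; your method does not extend as directly to that next step.
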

\begin{proof}
Let $PAP=\sum_{n\in I}\lambda_n(PAP)\kb{v_n}{u_n}$ and $QAQ=\sum_{n\in J}\lambda_n(QAQ)\kb{z_n}{w_n}$ be the canonical decompositions for $PAP$ and $QAQ$ respectively.\\
Since $\{u_n\}_{n\in I}$, $\{v_n\}_{n\in I}$ are orthonormal sets in $P(\hh)$, and $\{w_n\}_{n\in J}$, $\{z_n\}_{n\in J}$ are orthonormal sets in $Q(\hh)=\ker P$, it follows that $\{u_n,\, w_n\}_{n\in I\cup J}$ is an orthonormal part in $\hh$, and then
$$PAP+QAQ=\sum_{n\in I\cup J}\left(\lambda_n(PAP)\kb{v_n}{u_n}+\lambda_n(QAQ)\kb{z_n}{w_n}\right)$$
is the canonical decomposition for $PAP+QAQ$. This means that $\lambda_n(PAP+QAQ)=\lambda_n(PAP)+\lambda_n(QAQ)$, and so
$$\no{PAP+QAQ}_1=\no{PAP}_1+\no{QAQ}_1.$$
\end{proof}
\begin{lemma}\label{PAP}
Let $A\in\bbb{1}$ and $\{P_i\}_{i=1}^N$ be a family of mutually orthogonal projections in $\hh$. Then
$$\no{\sum_{i=1}^NP_iAP_i}_1\leq\no A_1.$$
\end{lemma}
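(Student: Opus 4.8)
The plan is to first reduce to the case where the projections sum to the identity, and then to establish the resulting pinching inequality by induction on $N$, gluing together blocks supported on orthogonal subspaces via Lemma \ref{=}. For the reduction, set $R=\sum_{i=1}^N P_i$, which is again a projection because the $P_i$ are mutually orthogonal. Since $P_iR=RP_i=P_i$, we have $\sum_i P_iAP_i=\sum_i P_i(RAR)P_i$, and two applications of the Hölder inequality $\no{AB}_1\leq\no A_1\no B_\infty$ (together with $\no{A^\ast}_1=\no A_1$ and $\no R_\infty\leq1$) give $\no{RAR}_1\leq\no A_1$. Hence it suffices to prove the bound for $B:=RAR$ with $P_1,\dots,P_N$, which now sum to the identity of $R(\hh)$; equivalently, from now on I assume $\sum_{i=1}^N P_i=I$.

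The heart of the matter is the case $N=2$ of two complementary projections $P_1$ and $P_2=I-P_1$, and this is the step I expect to be the main obstacle: Lemma \ref{=} by itself only gives the \emph{equality} $\no{P_1AP_1+P_2AP_2}_1=\no{P_1AP_1}_1+\no{P_2AP_2}_1$, each summand of which may be as large as $\no A_1$, so a genuine contraction argument is needed. I would supply it with an averaging trick. Introduce the self-adjoint unitary $U:=P_1-P_2=2P_1-I$; one checks $U^\ast=U$ and $U^2=P_1+P_2=I$. Expanding, $UAU^\ast=P_1AP_1-P_1AP_2-P_2AP_1+P_2AP_2$ while $A=P_1AP_1+P_1AP_2+P_2AP_1+P_2AP_2$, so that
$$P_1AP_1+P_2AP_2=\half\,(A+UAU^\ast).$$
The triangle inequality and the unitary invariance of the trace norm ($\no{UAU^\ast}_1=\no A_1$) then yield $\no{P_1AP_1+P_2AP_2}_1\leq\half(\no A_1+\no A_1)=\no A_1$.

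For the inductive step I assume the claim for fewer than $N$ projections (on any Hilbert space) and peel off $P_N$. Put $Q=I-P_N=\sum_{i=1}^{N-1}P_i$. Regarding $P_1,\dots,P_{N-1}$ as mutually orthogonal projections summing to the identity $Q$ of $Q(\hh)$ and applying the inductive hypothesis to $QAQ$ gives $\no{\sum_{i=1}^{N-1}P_iAP_i}_1=\no{\sum_{i=1}^{N-1}P_i(QAQ)P_i}_1\leq\no{QAQ}_1$, where I used $P_iQ=QP_i=P_i$ for $i\leq N-1$. Since $P_NAP_N$ and $\sum_{i=1}^{N-1}P_iAP_i$ are supported on the orthogonal ranges of $P_N$ and $Q$, Lemma \ref{=} applied to $B=\sum_{i=1}^N P_iAP_i$ (for which $P_NBP_N=P_NAP_N$ and $QBQ=\sum_{i<N}P_iAP_i$) gives
$$\no{\sum_{i=1}^N P_iAP_i}_1=\no{P_NAP_N}_1+\no{\sum_{i=1}^{N-1}P_iAP_i}_1\leq\no{P_NAP_N}_1+\no{QAQ}_1.$$
A final use of Lemma \ref{=} rewrites the right-hand side as $\no{P_NAP_N+QAQ}_1$, which is $\leq\no A_1$ by the $N=2$ case already proved. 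This closes the induction.

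As a remark, one can bypass the induction entirely by a duality argument: writing $\no T_1=\sup\{\,|\tr{TC}|:\no C_\infty\leq1\,\}$ and using $\tr{(\sum_i P_iAP_i)C}=\tr{A\,\sum_i P_iCP_i}$, one may replace $C$ by its block truncation $C'=\sum_i P_iCP_i$, which satisfies $\no{C'}_\infty\leq\no C_\infty\leq1$; the bound $\no{\sum_i P_iAP_i}_1\leq\no A_1$ then follows at once from Hölder's inequality. I would, however, present the inductive proof as primary, since it is the natural continuation of Lemma \ref{=}.
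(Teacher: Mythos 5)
Your proof is correct, but it takes a genuinely different route from the paper's. The paper works directly with the canonical decompositions $A=\sum_n\lambda_n(A)\kb{v_n}{u_n}$ and $B=\sum_m\lambda_m(B)\kb{z_m}{w_m}$ of $A$ and $B=\sum_iP_iAP_i$, writes $\lambda_m(B)=\sum_n\alpha_{nm}\lambda_n(A)$ with $\alpha_{nm}=\sum_i\ip{P_iz_m}{v_n}\ip{u_n}{P_iw_m}$, and uses Cauchy--Schwarz to show that the matrix $(\alpha_{nm})$ is doubly substochastic ($\sum_n\mid\!\alpha_{nm}\!\mid\leq 1$ and $\sum_m\mid\!\alpha_{nm}\!\mid\leq 1$), from which $\no B_1\leq\no A_1$ follows by summing; it does not use Lemma \ref{=} at all. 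You instead isolate the two-block case via the identity $P_1AP_1+P_2AP_2=\half(A+UAU^\ast)$ with $U=P_1-P_2$ a self-adjoint unitary, and assemble the general case by induction, invoking Lemma \ref{=} precisely where block trace norms supported on orthogonal ranges must be added. Every step of yours checks out: the reduction to $\sum_iP_i=I$ via $RAR$ and H\"older is sound, $U^2=I$ indeed holds when $P_1+P_2=I$, and the two uses of Lemma \ref{=} in the inductive step are legitimate since $P_NBP_N=P_NAP_N$ and $QBQ=\sum_{i<N}P_iAP_i$. What your route buys is conceptual transparency --- the pinching is exhibited as an average of unitary conjugates, hence a contraction by the triangle inequality and unitary invariance --- and generality: the identical argument yields $\no{\sum_iP_iAP_i}_p\leq\no A_p$ for every $p$, indeed for any unitarily invariant norm, whereas the paper's computation is tied to the $1$-norm. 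The paper's proof, in exchange, is a single self-contained computation requiring neither reduction nor induction, and it delivers a slightly finer piece of information, namely an explicit doubly substochastic matrix relating the singular values of $B$ to those of $A$. Your closing duality remark is also valid (the key point $\no{\sum_iP_iCP_i}_\infty\leq\no C_\infty$ holds because the ranges of the $P_i$ are mutually orthogonal) and is arguably the shortest of the three arguments.
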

\begin{proof}
Set $B:=\sum_{i=1}^NP_iAP_i$.\\
Let $A=\sum_n\lambda_n(A)\kb{v_n}{u_n}$ and $B=\sum_m\lambda_m(B)\kb{z_m}{w_m}$ be the canonical decomposition of $A$ and $B$ respectively. Then
$$\lambda_m(B)=\ip{z_m}{Bw_m}=\sum_{i=1}^N\ip{P_iz_m}{AP_iw_m}=\sum_n\lambda_n(A)\alpha_{nm}$$
with $\alpha_{nm}:=\sum_{i=1}^N\ip{P_iz_m}{v_n}\ip{u_n}{P_iw_m}$.
Note that, by Cauchy-Schwartz inequality, we have
\begin{eqnarray*}
\sum_n\mid\!\alpha_{nm}\!\mid&\leq&\left(\sum_{i,n}\mid\! \ip{P_iz_m}{v_n}\!\mid^2\right)^{1/2}\left(\sum_{i,n}\mid\! \ip{u_n}{P_iw_m}\!\mid^2\right)^{1/2}\\
&\leq&\left(\sum_{i=1}^N\no{P_iz_m}^2\right)^{1/2}\left(\sum_{i=1}^N\no{P_iw_m}^2\right)^{1/2}\leq \no{z_m}\,\no{w_m}=1,
\end{eqnarray*}
for $\{v_n\}_n$ and $\{u_n\}_n$ are orthonormal sets and the projections $P_i$ are mutually orthogonal. Similarly, $\sum_m\mid\!\alpha_{nm}\!\mid\leq 1$.\\
Therefore we get
\begin{eqnarray*}
\no{B}_1&=&\sum_m\mid\!\lambda_m(B)\!\mid=\sum_m\mid\!\sum_n\alpha_{nm}\lambda_n(A)\!\mid\leq\sum_m\sum_n\mid\!\alpha_{nm}\!\mid\lambda_n(A)\\
&=&\sum_n\left(\sum_m\mid\!\alpha_{nm}\!\mid\right)\lambda_n(A)\leq\sum_n\lambda_n(A)=\no A_1.
\end{eqnarray*}
\end{proof}
As a simple consequence of the previous Proposition and Lemma \ref{=} we have the following
\begin{lemma}\label{cor}
If $A\in\bbb{1}$, $P$ is a projection in $\hh$ and $Q=I-P$, then
$$\no{PAP}_1+\no{QAQ}_1\leq\no A_1.$$
In particular, if $P$ commutes with $A$, then
$$\no{PAP}_1+\no{QAQ}_1=\no A_1.$$
\end{lemma}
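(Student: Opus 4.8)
The statement breaks into two parts: the inequality for a general projection $P$, and the equality in the commuting case. The plan is to derive both directly from the two preceding results, Lemma \ref{=} and Lemma \ref{PAP}, treating the commuting situation as a special case where the general inequality becomes tight.

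For the inequality, I would first apply Lemma \ref{PAP} with the two-element family of mutually orthogonal projections $\{P, Q\}$, where $Q = I - P$ (these are orthogonal precisely because $PQ = P(I-P) = P - P = 0$). This immediately gives $\no{PAP + QAQ}_1 \leq \no{A}_1$. Then I would invoke Lemma \ref{=}, which says $\no{PAP}_1 + \no{QAQ}_1 = \no{PAP + QAQ}_1$. Chaining the two yields $\no{PAP}_1 + \no{QAQ}_1 \leq \no{A}_1$, which is the first claim. This is essentially a two-line composition with no real obstacle, since all the work has been front-loaded into the earlier lemmas.

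For the equality under the hypothesis $PA = AP$, the key observation is that when $P$ commutes with $A$ the cross terms vanish: $PAQ = P A (I-P) = PA - PAP = AP - PAP = (A-PA)P\cdot$ — more cleanly, $PAQ = PAQ = APQ = 0$ and likewise $QAP = QPA = 0$, using $PQ = QP = 0$. Hence $A = (P+Q)A(P+Q) = PAP + PAQ + QAP + QAQ = PAP + QAQ$ exactly. Therefore $\no{A}_1 = \no{PAP + QAQ}_1$, and applying Lemma \ref{=} once more converts the right-hand side into $\no{PAP}_1 + \no{QAQ}_1$, giving the equality.

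The only point requiring a little care is verifying that $P$ commuting with $A$ forces the off-diagonal blocks $PAQ$ and $QAP$ to be zero; everything else is a direct substitution into Lemma \ref{=}. I do not anticipate any genuine difficulty here, as the heavy lifting (the canonical-decomposition argument showing $1$-norm additivity across orthogonal corners, and the norm-nonincreasing property of the pinching $A \mapsto \sum_i P_i A P_i$) is already established in Lemmas \ref{=} and \ref{PAP}.
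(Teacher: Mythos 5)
Your proposal is correct and follows essentially the same route as the paper: the inequality is obtained by applying Lemma \ref{PAP} to the family $\{P,Q\}$ and converting $\no{PAP+QAQ}_1$ into $\no{PAP}_1+\no{QAQ}_1$ via Lemma \ref{=}, and the equality follows from the observation that commutation forces $PAQ=QAP=0$, hence $A=PAP+QAQ$, which is exactly the paper's one-line justification.
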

\begin{proof} The second part follows by Lemma \ref{=} since the commutation between
$A$ and $P$ clearly implies $PAP+QAQ=A$.
\end{proof}
\smallskip
We are now in the condition to prove Theorem \ref{teo}.

\begin{proof}[Proof of Theorem \ref{teo}]
Let $P_N = \sum_{n=0}^{N-1} \kb{u_n}{u_n}$ be the projection on the linear span of the first $N$ eigenvectors of $A_0$, and $Q_N = I - P_N$. By triangle inequality
$$
\no{A_0 - A}_1 \leq \no{P_N (A_0 - A)}_1 + \no{Q_N (A_0 - A) P_N}_1 + \no{Q_N A_0 Q_N}_1 + \no{Q_N A Q_N}_1
$$
By Lemma \ref{cor}
\begin{eqnarray}\label{questa}
\no{Q_N A Q_N}_1 &\leq& 1 - \no{P_N A P_N}_1 \\
\no{P_N A_0 P_N}_1 &=& 1 - \no{Q_N A_0 Q_N}_1 , \label{l'altra}
\end{eqnarray}
since $P$ commutes with $A_0$. Therefore, by triangle inequality
\begin{eqnarray*}
\no{P_N A P_N}_1 & \geq & - \no{P_N (A_0 - A) P_N}_1 + \no{P_N A_0 P_N}_1 \\
& = & - \no{P_N (A_0 - A) P_N}_1 + 1 - \no{Q_N A_0 Q_N}_1
\end{eqnarray*}
we have
\begin{eqnarray*}
\no{A_0 - A}_1 & \leq & \no{P_N (A_0 - A)}_1 + \no{Q_N (A_0 - A) P_N}_1 + \no{P_N (A_0 - A) P_N}_1 \\
&& + 2 \no{Q_N A_0 Q_N}_1 \\
& \leq & 3 \no{P_N}_{p/(p-1)} \no{A_0 - A}_p + 2 \no{Q_N A_0
Q_N}_1.
\end{eqnarray*}
Since $\no{P_N}_{p/(p-1)} = N^{(p-1)/p}$ and $\mid\!Q_N A_0
Q_N\!\mid=\sum_{n\geq N}\mid\!\mu_n(A)\!\mid\kb{u_n}{u_n}$ by
spectral theorem, we obtain
$$\no{Q_N A_0 Q_N}_1=\sum_{n\geq
N}\mid\!\mu_n(A)\!\mid$$ so that eq.~\ref{ris} follows.
\end{proof}

\begin{remark}
If in Theorem \ref{teo} it is assumed $A_0 , A\in\sh$, then the above proof simplifies, since Lemma \ref{cor} is no longer needed to prove eqs.~\ref{questa}, \ref{l'altra}. In fact, in this case $\no{A}_1 = \tr{A} = \tr{P_N A P_N} + \tr{Q_N A Q_N} = \no{P_N A P_N}_1 + \no{Q_N A Q_N}_1$.
\end{remark}

\end{document}